\newtheorem{theorem}{Theorem}
\newtheorem{proof}{Proof}
\newcommand{\eig}{\mathrm{Eig}}
\begin{document}

\title{Variable Block-Correlation Modeling\\ and Optimization for Secrecy Analysis in\\Fluid Antenna Systems}

\author{Tuo Wu, 
	         Kwai-Man Luk, \IEEEmembership{Life Fellow,~IEEE},
            Jie Tang, 
            Kai-Kit Wong, \emph{Fellow}, \emph{IEEE}, 
            Jianchao Zheng, 
            Baiyang Liu, \\
            David Morales-Jimenez, 
            Maged Elkashlan, 
            Kin-Fai Tong, \emph{Fellow, IEEE}, 
            Chan-Byoung Chae, \emph{Fellow, IEEE}, 
            Fumiyuki Adachi, \IEEEmembership{Life Fellow,~IEEE}, and 
            George K. Karagiannidis,  \emph{Fellow, IEEE} 
\vspace{-5mm}

\thanks{(\textit{Corresponding author: Kai-Kit Wong.})}

\thanks{This research work of T. Wu was funded by Hong Kong Research Grants Council under the Area of Excellence Scheme under Grant AoE/E-101/23-N. T. Wu and K.-M. Luk  are with the State Key Laboratory of Terahertz and Millimeter Waves, Department of Electronic Engineering, City University of Hong Kong, Hong Kong (E-mail: $\rm \{tuowu2, eekmluk\}@cityu.edu.hk$).
J. Tang is with the School of Electronic and Information Engineering, South China University of Technology, Guangzhou 510640, China (E-mail: $\rm eejtang@scut.edu.cn$).
K. K. Wong is with the Department of Electronic and Electrical Engineering, University College London, WC1E 6BT London, U.K., and also with the Yonsei Frontier Laboratory and the School of Integrated Technology, Yonsei University, Seoul 03722, South Korea (E-mail:$\rm  kai$-$\rm kit.wong@ucl.ac.uk$).
J. Zheng is with the School of Computer Science and Engineering, Huizhou University, Huizhou 516007, China (E-mail: $\rm zhengjch@hzu.edu.cn$).
B. Liu and  K. F. Tong are with the School of Science and Technology, Hong Kong Metropolitan University, Hong Kong SAR, China. (E-mail: $\rm \{byliu,ktong\}@hkmu.edu.hk$). 
D. M. Jimenez is with Department of Signal Theory, Networking and 	Communications, Universidad de Granada, Granada 18071, Spain (E-mail: $\rm dmorales@ugr.es$).
M. Elkashlan is with the School of Electronic Engineering and Computer Science at Queen Mary University of London, London E1 4NS, U.K. (E-mail: $\rm maged.elkashlan@qmul.ac.uk$). 
C.-B. Chae is with the School of Integrated Technology, Yonsei University, Seoul 03722 Korea. (E-mail: $\rm cbchae@yonsei.ac.kr$).
F. Adachi is with the International Research Institute of Disaster Science (IRIDeS), Tohoku University, Sendai, Japan (E-mail: $\rm adachi@ecei.tohoku.ac.jp$).
G. K. Karagiannidis is with the Department of Electrical and Computer Engineering, Aristotle University of Thessaloniki, 54124 Thessaloniki, Greece (E-mail: $\rm geokarag@auth.gr$).} 
}

\maketitle

\begin{abstract}
Fluid antenna systems (FAS) are emerging as a transformative enabler for sixth-generation (6G) wireless communications, providing unprecedented spatial diversity through dynamic reconfiguration of antenna ports. However, the inherent spatial correlation among ports poses significant challenges for accurate analysis. Conventional models such as Jakes are analytically intractable, while oversimplified constant-correlation models fail to capture the true behavior. In this work, we address these challenges by applying the variable block-correlation model (VBCM)---originally proposed by Ram\'{i}rez-Espinosa \textit{et al.} in 2024---to FAS security analysis, and by developing comprehensive optimization methods to enhance analytical accuracy. We derive new closed-form expressions for average secrecy capacity (ASC) and secrecy outage probability (SOP), demonstrating that the VBCM framework achieves simulation-aligned accuracy, with relative errors consistently below $5\%$ (compared to $10$--$15\%$ for constant-correlation models). To maximize ASC, we further design two algorithms: a grid search (GS) method and a gradient descent (GD) method. Numerical results reveal that the VBCM-based approach not only provides reliable insights into FAS security performance, but also yields substantial gains---ASC improvements exceeding $120\%$ in high-threat scenarios and $18$--$19\%$ performance enhancements for compact antenna configurations. \textit{These findings underscore the practical value of integrating VBCM into FAS security analysis and optimization, establishing it as a powerful tool for advancing 6G communication systems.}
\end{abstract}

\begin{IEEEkeywords}
Fluid antenna system (FAS), physical layer security (PLS), variable block-correlation modeling (VBCM), secrecy outage probability (SOP), average secrecy capacity (ASC).
\end{IEEEkeywords}

\section{Introduction}

\IEEEPARstart{T}{he} evolution of wireless communications towards sixth-generation (6G) networks is driving a demand for cutting-edge technologies capable of delivering ultra-high data rates, improved reliability, expanded coverage, and the capacity for massive connectivity \cite{Tariq-2020,Andrews20246G}. These enhancements will facilitate transformative applications, including immersive experiences, ultra-reliable low-latency communications (URLLC), and integrated sensing and communication (ISAC) \cite{ITUwhite}. 

Currently, massive multiple-input multiple-output (MIMO) systems serve as a fundamental component of contemporary wireless networks, utilizing many antennas to exploit spatial diversity \cite{Marzetta-2010}. Nevertheless, the performance of massive MIMO is dictated by the quantity of the available radio-frequency (RF) chains at the terminals.  To achieve the ambitious objectives of 6G, a prominent technology involves augmenting the number of antennas at the base station (BS), leading to the development of extra-large MIMO (XL-MIMO) \cite{10379539}. Nonetheless, rising hardware expenses, increased power consumption, and greater operational complexity pose substantial challenges to the scalability of traditional MIMO systems.

Fluid antenna systems (FAS) \cite{Wong-2020ell,New24,TWu20243}, sometimes referred to as movable antennas if mechanically movable antennas are concerned \cite{Zhu-Wong-2024,LZhu25,LZhu23}, have emerged as a viable solution to address these challenges. First proposed by Wong {\em et al.}~in \cite{FAS20,FAS21}, the emergence of FAS aims to integrate reconfigurable antenna architectures into physical-layer design and network optimization. The concept of FAS treats antenna as a reconfigurable, physical-layer resource, facilitating new coding, signal processing and network optimization schemes, and inspiring new reconfigurable antenna designs. Many efforts have since been made to utilize FAS for enhancing the performance in numerous applications, e.g., \cite{YaoJ241, HXu23, XLai23, BC24, LaiX242, YaoJ251, JYao2024, YaoJ252, Ghadi-2023, New-twc2023, NWaqar23, CWang24}.  

Early prototypes investigated fluidic implementations utilising conductive liquids within precision-controlled structures \cite{shen2024design}. However, technologies such as reconfigurable metasurfaces \cite{BLiu25}, and spatially distributed antenna pixels \cite{Zhang25}, are more effective by eliminating any response time for the FAS reconfigurability. This reconfigurability represents a significant difference from fixed-position antenna (FPA) arrays, in which element locations remain permanently static.  Dynamic control of spatial points as radiating elements in FASs enhances spatial degrees of freedom (DoFs), resulting in a virtual aperture that exceeds the constraints of conventional antenna arrays. The increased flexibility provides notable performance advantages but also raises new concerns regarding system security and reliability in real-world applications. 

FAS fundamentally functions via intelligent port selection mechanisms, enabling the system to dynamically activate and deactivate various antenna ports to enhance communication performance \cite{New24,Wong-frontiers22}. The adaptive capability enables FAS to utilize spatial diversity and mitigate channel fading through the selection of optimal port positions. Nonetheless, the reconfigurability that facilitates performance improvement also introduces significant security concerns that require thorough analysis. Physical-layer security (PLS) has received considerable attention as an effective method for securing wireless communications independent of cryptographic keys \cite{Wyner1975,HNiututorial}. Within the framework of FAS, PLS introduces distinct challenges and novel opportunities, as examined in several recent studies \cite{Shojaeifard,Security1,Security2,Security3,Security4,Security5}.  

Thus, the security analysis of FAS must consider the adaptive port selection strategies of both the legitimate receiver and adversary, and accurately model the complex spatial correlation structure, which presents significant analytical challenges.  The proximity of antenna ports in FAS results in considerable spatial correlation among channel coefficients, necessitating precise characterization to evaluate system performance and security properties.  Accurately modelling the FAS channel is a real challenge, and the channel covariance matrix typically demonstrates a Toeplitz structure, making analytical treatment intractable, unless an overly-simplified model is used \cite{FAS22}.  

To make the analysis tractable, previous studies have proposed spatial block-correlation models that divide the correlated channel coefficients (ports) into independent blocks \cite{BC24}. While this approach is general and tractable, some limitations arise in capturing complex channel structures, especially in the case with fewer ports \cite{BC24, LaiX242,FAS22, Chai22, FAMS}. The spatial block-correlation model, introduced in \cite{BC24}, established a general framework allowing for block-specific correlation coefficients. To illustrate the model's potential, the original work focused on the simplified case of a common correlation parameter. Building upon this, \cite{FAS_BC_paper} later developed a comprehensive algorithm to optimize distinct correlation values for each block, thereby fully enabling the variable block-correlation model (VBCM) and enhancing its accuracy.


The incorporation of VBCM into FAS security analysis presents two significant advantages compared to the constant correlation methods.  VBCM facilitates improved modeling of the intricate spatial correlation structure present in FAS deployments by permitting correlation coefficients to differ across various blocks, thereby accurately reflecting the non-uniform correlation patterns \cite{BC24, FAS_BC_paper}.  The variable correlation framework offers enhanced analytical tractability and high accuracy, resulting in more dependable performance analysis and improved system design optimization \cite{BC24, FAS_BC_paper}.

The introduction of VBCM into FAS security analysis poses notable theoretical and computational challenges. The main challenge is the intricate mathematical framework of VBCM, necessitating advanced optimization algorithms to ascertain optimal block-specific correlation coefficients via eigenvalue-based parameter optimization.  The resulting channel statistics involve complex multi-block probability distributions that cannot be expressed in closed form, requiring advanced numerical integration techniques like Gauss-Chebyshev quadrature for precise evaluation of secrecy metrics. The optimization of FAS security parameters under VBCM is computationally intensive, necessitating repeated evaluations of complex integrals that involve Marcum Q-functions and multi-dimensional probability distributions. This complexity renders real-time optimization difficult for practical applications.

This paper presents a comprehensive VBCM-based security analysis framework designed to systematically address the theoretical and computational complexities associated with these challenges. The VBCM optimization problem is initially formulated to identify optimal block-specific correlation coefficients via eigenvalue-based parameter optimization. We derive the statistical distributions of maximum channel amplitudes under VBCM, involving multi-block probability distributions that necessitate advanced numerical integration techniques. We utilize Gauss-Chebyshev quadrature to assess the complex integrals associated with Marcum Q-functions, which are pertinent to the calculations of secrecy capacity and outage probability. We develop optimization algorithms that effectively balance computational complexity and solution quality, facilitating the practical implementation of VBCM-based FAS security optimization. This work's primary contributions are summarized as follows:
\begin{itemize}
\item \textbf{\textit{Advanced Security Analysis Framework}}---We develop a comprehensive security analysis framework for FAS-enabled secure  communication systems using  VBCM. Unlike conventional approaches that assume uniform constant correlation, our framework is based on the block-correlation model of \cite{BC24}, which allows correlation coefficients to vary across different blocks. We specifically develop a novel method to optimize these correlation parameters, providing superior accuracy (compared to uniform correlation models) especially for FAS deployments with a small number of ports ($N<20$).  
\item \textbf{\textit{Rigorous Mathematical Formulation}}---We derive analytical expressions for the average secrecy capacity (ASC) and secrecy outage probability (SOP). Our analysis provides closed-form expressions that enable efficient performance evaluation while maintaining high accuracy.
\item \textbf{\textit{Comprehensive Optimization Algorithms}}---We develop two optimization algorithms, grid search (GS) and gradient descent (GD) algorithms, to maximize the secrecy performance under various system constraints. These algorithms provide a complete framework for practical FAS security optimization, balancing between global optimality and computational efficiency.
\item \textbf{\textit{Extensive Performance Validation}}---We conduct comprehensive numerical simulations to validate our theoretical framework across various system configurations. The results demonstrate that VBCM-based security evaluation achieves significantly tighter agreement with   simulations compared to constant-correlation approaches.
\item \textbf{\textit{Practical Design Insights}}---Our analysis reveals that FAS can achieve substantial security improvements  with ASC gains exceeding $120\%$ in certain scenarios. We provide practical guidelines for FAS parameter selection and optimization that can inform real-world system deployments.
\end{itemize}

The remainder of this paper is organized as follows. Section \ref{sec:system} introduces the system model for FAS-enabled secure communication and the block-correlation model. Then Section \ref{sec:security} presents the security analysis framework, including the derivation of ASC and SOP expressions. In Section \ref{sec:opt}, we detail the optimization algorithms for FAS security enhancement. Afterwards, Section \ref{sec:analysis} provides extensive numerical results and performance validation. Finally, Section \ref{sec:conclude} concludes the paper and discusses future research directions.

\section{System Model for FAS-Enabled Secure Communication}\label{sec:system}

\begin{figure}[t]
\centering
\includegraphics[width=.9\columnwidth]{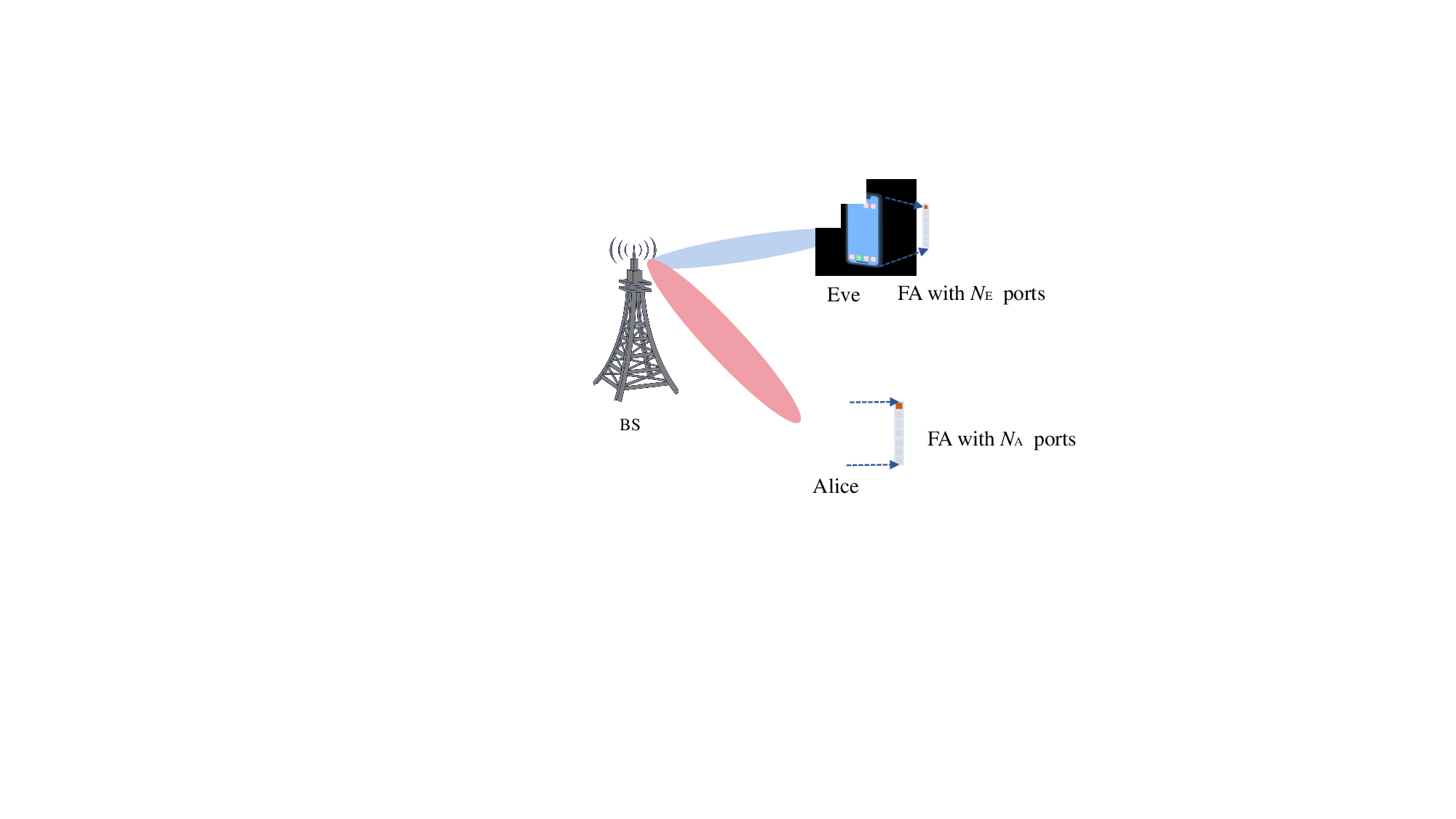}
\caption{The system model of FAS-enabled secure communication.}\label{system_model}
\end{figure}

As illustrated in Fig.~\ref{system_model}, we consider a FAS-enabled secure communication system comprising a single-antenna BS, a legitimate user (Alice), and an eavesdropper (Eve) operating in Rayleigh fading channels. Both Alice and Eve are equipped with one-dimensional linear fluid antennas enabling switching among $N_A$ and $N_E$ discrete ports, respectively. These ports are distributed within a linear space of length $W\lambda$, where $W$ is a normalization parameter and $\lambda$ is the carrier wavelength. Moreover, a rich-scattering model is assumed and the number of scatterers surround Alice and Eve uniformly.

\subsection{Communication Model}
The BS transmits a signal $s$ (with $\mathbf{E}[|s|^2]=1$) with power $\mathcal{P}$. The signal received at the $k$-th port of Alice is given by
\begin{align}
y_{A,k} = \sqrt{\mathcal{P}} g_{A,k} s + n_{A,k}, \quad k \in \{1, \dots, N_A\},
\end{align}
where $g_{A,k} \sim \mathcal{CN}(0, \eta_A)$ is the complex Gaussian channel coefficient from the BS to Alice's $k$-th port, and $n_{A,k} \sim \mathcal{CN}(0, \sigma_A^2)$ represents additive white Gaussian noise. A similar model applies to Eve with parameters $g_{E,k} \sim \mathcal{CN}(0, \eta_E)$ and $n_{E,k} \sim \mathcal{CN}(0, \sigma_E^2)$.

Both users select the port with their highest instantaneous channel gain, which leads to
\begin{equation}
\left\{\begin{aligned}
i^* &= \arg\max_{k \in \{1, \dots, N_A\}} |g_{A,k}|,\\
j^* &= \arg\max_{k \in \{1, \dots, N_E\}} |g_{E,k}|.
\end{aligned}\right.
\end{equation}
For convenience, the resulting maximum channel amplitudes are denoted as $\alpha_A^* = |g_{A,i^*}|$ and $\alpha_E^* = |g_{E,j^*}|$.

\subsection{FAS Channel Correlation Model}
Due to small spacing between adjacent ports, the channel coefficients are spatially correlated. The channel vector $\mathbf{g}_A = [g_{A,1}, \dots, g_{A,N_A}]^T$ has the covariance matrix $\mathbf{\Sigma}_A = \mathbf{E}[\mathbf{g}_A \mathbf{g}_A^H]$. Following the Jakes model, the correlation coefficient between ports $k$ and $l$ is expressed as
\begin{align}
\sigma_{k,l} = J_0\left(\frac{2\pi |k-l| W}{N_A-1}\right),
\end{align}
where $J_0(\cdot)$ is the zero-order Bessel function of the first kind. The covariance matrix is formulated as
\begin{align}
\mathbf{\Sigma}_A = \eta_A \begin{pmatrix}
			\sigma_{1,1} & \sigma_{1,2} & \cdots & \sigma_{1,N_A} \\
			\sigma_{1,2} & \sigma_{1,1} & \cdots & \sigma_{1,N_A-1} \\
			\vdots &  & \ddots & \vdots \\
			\sigma_{1,N_A}& \sigma_{1,N_A-1} & \cdots & \sigma_{1,1}
\end{pmatrix}.
\end{align}
	
\section{Security Performance Analysis}\label{sec:security}
Having established the FAS system model with spatially correlated channels, we now analyze the security performance of FAS-enabled communication \cite{Wyner1975}.
	

\subsection{Secrecy Performance Metrics}
Given the maximum channel amplitudes $\alpha_A^*$ and $\alpha_E^*$ obtained from the port selection process described above, the instantaneous signal-to-noise ratios (SNR) at the selected ports are expressed as
\begin{equation}\label{eq:snr_def}
\left\{\begin{aligned}
\gamma_A &= \frac{\mathcal{P}(\alpha_A^*)^2}{\sigma_A^2},\\
\gamma_E &= \frac{\mathcal{P}(\alpha_E^*)^2}{\sigma_E^2}.
\end{aligned}\right.
\end{equation}
Hence, the achievable data rates for Alice and Eve are respectively formulated as
\begin{equation}\label{eq:capacity_def}
\left\{\begin{aligned}
C_A &= \log_2(1 + \gamma_A),\\
C_E &= \log_2(1 + \gamma_E).
\end{aligned}\right.
\end{equation}
Thus, the secrecy capacity is expressed as  \cite{Wyner1975}
\begin{align}\label{eq:cs_def}
C_s = [C_A - C_E]^+,
\end{align}
where $[x]^+=\max\{0,x\}$.

To comprehensively evaluate the secrecy performance, we consider two fundamental metrics widely adopted in PLS analysis. First of all, the ASC  quantifies the  secrecy performance and is defined as
\begin{align}\label{eq:asc_def}
\bar{C}_s = \mathbf{E}[C_s] = \mathbf{E}[[C_A - C_E]^+].
\end{align} 
Besides, the SOP  characterizes the reliability of secure communication by measuring the probability that the instantaneous secrecy capacity falls below a target secrecy rate $R_s > 0$. Hence, SOP is defined as
\begin{align}\label{eq:sop_def}
P_{\text{sop}} = \Pr(C_s < R_s) = \Pr([C_A - C_E]^+ < R_s).
\end{align}

\subsection{VBCM }
To evaluate the secrecy metrics defined in \eqref{eq:asc_def} and \eqref{eq:sop_def}, we need the statistical distributions of the maximum channel amplitudes $\alpha_A^*$ and $\alpha_E^*$. However, the spatial correlation structure described by the Toeplitz covariance matrix $\mathbf{\Sigma}_h$ in Section \ref{sec:system} renders direct analytical evaluation intractable.

To overcome this challenge, we employ the  VBCM as a mathematically tractable approximation \cite{BC24}. For an $N \times N$ Toeplitz matrix $\mathbf{\Sigma}_h$, this model approximates it with a block-diagonal structure that captures the essential correlation characteristics while enabling analytical progress.

The VBCM partitions the $N$ correlated channels into $D$ independent blocks of sizes $\{L_d\}_{d=1}^D$ such that the constraint $\sum_{d=1}^D L_d = N$ is satisfied. Within each block $d$, the correlation structure is simplified to a constant correlation model, where the correlation matrix is expressed as \cite{BC24}
\begin{align}\label{eq:block_corr_matrix}
\mathbf{A}_d = \begin{pmatrix}
			1 & \rho_d & \cdots & \rho_d \\
			\rho_d & 1 & \cdots & \rho_d \\
			\vdots &  & \ddots & \vdots \\
			\rho_d & \rho_d& \cdots & 1 
\end{pmatrix} \in\mathbb{R}^{L_d\times L_d},
\end{align}
where $\rho_d \in [0, 1]$ is a variable correlation coefficient for block $d$. To ensure the VBCM provides an accurate approximation of the original Toeplitz matrix, the parameters $D$, $\{L_d\}$, and $\{\rho_d\}$ are optimized by minimizing the eigenvalue distance between the original and approximated matrices. This optimization problem is formulated as
\begin{align}\label{eq:obj_func}
\underset{D, \{\rho_d\}, \{L_d\}}{\min} \|\eig(\mathbf{\Sigma}_h) - \eig(\mathbf{\hat{\Sigma}}_h)\|_2^2,
\end{align}
where $\mathbf{\hat{\Sigma}}_h$ represents the block-diagonal approximation of $\mathbf{\Sigma}_h$.

The optimization can be decomposed into block-wise subproblems. For each block $d$, the assignment error when eigenvalue $\lambda_{\text{cur}}$ is added to block $d$ is quantified by
\begin{multline}\label{eq:block_error}
\mathrm{dist}_d = (1 + \rho_d(L_d-1) - \lambda_d)^2\\
 + \sum_{k \in K_d \cup \{\lambda_{\text{cur}}\}} (\lambda_k - 1 + \rho_d)^2,
\end{multline}
where $K_d$ contains the eigenvalues currently assigned to block $d$, and $L_d = |K_d| + 1$ is the resulting block size.

For a given block assignment, the optimal correlation coefficient $\rho_d^*$ that minimizes the block error is obtained by solving the least-squares problem:
\begin{align}\label{eq:optimal_rho}
\rho_d^* = \min\left(\max\left(\frac{(L_d-1)\lambda_d - \sum_{k \in K_d \cup \{\lambda_{\text{cur}}\}} \lambda_k}{2(L_d-1)}, 0\right), 1\right).
\end{align}

\begin{algorithm}[h!] \label{algorithm1}
\caption{Variable Block-Correlation Parameter Optimization}
\begin{algorithmic}[1]
\Require $D$ (number of blocks), $\eig(\mathbf{\Sigma}_h)=[\lambda_1, \dots, \lambda_N]$ (eigenvalues in descending order)
\State Split eigenvalues: $\Lambda_1=[\lambda_1,\dots,\lambda_D]$ (dominant eigenvalues), $\Lambda_2=[\lambda_{D+1},\dots,\lambda_N]$ (remaining eigenvalues)
\State Initialize empty block eigenvalue sets $K_d=[\,]$ and $\rho_d^*=1$ for $d=1,\dots,D$
			\For{$n=1$ to $N-D$} \Comment{Distribute remaining eigenvalues to blocks}
			\State Let $\lambda_{\text{cur}} = \lambda_{D+n}$ \Comment{Current eigenvalue to assign}
			\For{$d=1$ to $D$} \Comment{Evaluate assignment to each block}
			\State Compute $L_d = |K_d| + 1$ (block size if $\lambda_{\text{cur}}$ is assigned to block $d$)
			\State Calculate optimal $\tilde{\rho}_d$ using \eqref{eq:optimal_rho}
			\State Compute assignment error using \eqref{eq:block_error}
			\EndFor
			\State Find best assignment: $d^* = \arg\min_{d \in \{1,\dots,D\}} \mathrm{dist}_d$
			\State Update block: $K_{d^*} \leftarrow [K_{d^*}, \lambda_{\text{cur}}]$ and $\rho_{d^*}^* \leftarrow \tilde{\rho}_{d^*}$
			\EndFor
			\State Determine final block sizes: $L_d = |K_d| + 1$ for $d=1,\dots,D$
			\Ensure Optimal parameters $\{L_d\}_{d=1}^D$ and $\{\rho_d^*\}_{d=1}^D$
		\end{algorithmic}
\end{algorithm}

The algorithm operates on the principle that the eigenvalue structure of the block-diagonal approximation should match that of the original Toeplitz matrix as closely as possible. The first $D$ largest eigenvalues ($\Lambda_1$) are reserved as the dominant eigenvalues for each block, while the remaining smaller eigenvalues ($\Lambda_2$) are distributed among the $D$ blocks to minimize the overall approximation error. The algorithm uses a greedy assignment strategy: for each remaining eigenvalue $\lambda_{\text{cur}}$, it evaluates the cost of assigning it to each block by computing the resulting optimal correlation coefficient $\tilde{\rho}_d$ and the corresponding assignment error $\mathrm{dist}_d$. The eigenvalue is then assigned to the block that yields the minimum error. This approach ensures that eigenvalues with similar magnitudes are grouped together, leading to more accurate block-wise approximations while maintaining computational efficiency with only $(N-D) \times D$ evaluations instead of the exponential complexity $D^{N-D}$ required by exhaustive search.

\subsection{Channel Statistics Derivation}
With the VBCM framework established through the optimization problem \eqref{eq:obj_func}, we are ready to derive the statistical distributions of the maximum channel amplitudes required for evaluating the secrecy metrics in \eqref{eq:asc_def} and \eqref{eq:sop_def}. This derivation follows the established methodology for analyzing correlated channel statistics in FASs such as in \cite{FAS21, FAS22}.

Consider a generic FAS user with $N$ ports and covariance matrix $\mathbf{\Sigma}$. Applying Algorithm \textcolor{blue}{1} yields optimal VBCM parameters $(D, \{L_d\}_{d=1}^D, \{\rho_d\}_{d=1}^D)$, which transforms the original correlated channel vector into $D$ independent blocks as specified by the block structure in \eqref{eq:block_corr_matrix}. This transformation enables tractable analysis while maintaining high accuracy \cite{BC24,   FAS_BC_paper}.

Within each block $d$, the $L_d$ correlated channel coefficients can be decomposed using the canonical representation of the constant correlation model, which yields
\begin{align}\label{eq:channel_decomp}
\hat{g}_{k,d} = \sqrt{\eta(1-\rho_d)} p_{k,d} + \sqrt{\eta\rho_d} q_d,
\end{align}
where $k =\{1,\dots,N_A\}$ and $p_{k,d} \sim \mathcal{CN}(0,1)$ are independent and identically distributed (i.i.d.) private variables, and $q_d \sim \mathcal{CN}(0,1)$ is the common variable shared by all the channels in block $d$.

To analyze the distribution of channel amplitudes, we define  $\phi_d = \sqrt{\eta\rho_d}|q_d|$. Given this conditioning variable, $\alpha_{k,d} = |\hat{g}_{k,d}|$ follows a Rician distribution due to the decomposition in \eqref{eq:channel_decomp}. This approach is consistent with the established analysis of correlated Rayleigh fading channels \cite{book}. The conditional cumulative density function (CDF) is given by
\begin{align}\label{eq:rice_cdf}
F_{\alpha_{k,d}|\phi_d}(x|\theta) = 1 - Q_1\left(\frac{\theta}{\sigma_d}, \frac{x}{\sigma_d}\right),
\end{align}
where $\sigma_d^2 = \eta(1-\rho_d)/2$ represents the variance of the scattered component, and $Q_1(\cdot,\cdot)$ is the first-order Marcum Q-function \cite{book}.

Since FAS employs port selection based on the maximum channel gain, we need the distribution of the maximum amplitude within each block. Using the independence of channels within a block conditioned on $\phi_d$, the conditional CDF of the maximum amplitude within block $d$ is expressed as
\begin{align}\label{eq:block_max_cdf_cond}
F_{\alpha_d^*|\phi_d}(x|\theta) = \left[ F_{\alpha_{k,d}|\phi_d}(x|\theta) \right]^{L_d},
\end{align}
which follows from the order statistics of $L_d$ independent Rician random variables.

To obtain the unconditional distribution, we integrate over the distribution of $\phi_d$. Since $\phi_d$ follows a Rayleigh distribution with probability density function (PDF) $f_{\phi_d}(\theta) = \frac{2\theta}{\eta\rho_d} e^{-\theta^2/(\eta\rho_d)}$, the unconditional CDF is given by
\begin{align}\label{eq:block_cdf}
F_{\alpha_d^*}(x) = \int_0^\infty \left[1 - Q_1\left(\frac{\theta}{\sigma_d}, \frac{x}{\sigma_d}\right)\right]^{L_d} \frac{2\theta}{\eta\rho_d} e^{-\frac{\theta^2}{\eta\rho_d}} d\theta.
\end{align}

Since the $D$ blocks are independent under the VBCM framework, the overall CDF of the user's maximum channel amplitude is obtained as the product of individual block CDFs, which gives
\begin{align}\label{eq:cdf_final}
F_{\alpha^*}(x) = \prod_{d=1}^{D} F_{\alpha_d^*}(x).
\end{align}

The corresponding PDF required for the secrecy capacity analysis is obtained by differentiating \eqref{eq:cdf_final} and expressed as
\begin{align}\label{eq:pdf_final}
f_{\alpha^*}(x) = \sum_{d=1}^{D} \left[ f_{\alpha_d^*}(x) \prod_{j=1\atop j \neq d}^{D} F_{\alpha_j^*}(x) \right],
\end{align}
where $f_{\alpha_d^*}(x) = \frac{d}{dx} F_{\alpha_d^*}(x)$ represents the PDF of the maximum amplitude in block $d$.

\subsection{Derivation of ASC}
We now proceed to evaluate the ASC defined in \eqref{eq:asc_def}. This derivation addresses previous theoretical inconsistencies and provides a rigorous mathematical framework.

Starting from the definition of ASC in \eqref{eq:asc_def}, we expand the expectation operation to obtain
\begin{align}\label{eq:asc_expansion}
\bar{C}_s = \int_0^\infty \int_0^\infty [C_A(x) - C_E(y)]^+ f_{\alpha_A^*}(x) f_{\alpha_E^*}(y) \, dy \, dx,
\end{align}
where the PDFs $f_{\alpha_A^*}(x)$ and $f_{\alpha_E^*}(y)$ are given by \eqref{eq:pdf_final} for Alice and Eve, respectively. Substituting the SNR definitions from \eqref{eq:snr_def} and the capacity expressions from \eqref{eq:capacity_def}, the capacity functions become
\begin{equation}\label{eq:capacity_functions}
\left\{\begin{aligned}
C_A(x) &= \log_2\left(1 + \frac{\mathcal{P}x^2}{\sigma_A^2}\right),\\
C_E(y) &= \log_2\left(1 + \frac{\mathcal{P}y^2}{\sigma_E^2}\right).
\end{aligned}\right.
\end{equation}
Furthermore, the constraint $C_A(x) \geq C_E(y)$ for positive secrecy capacity, derived from the definition in \eqref{eq:cs_def}, yields the integration boundary condition $y \leq \frac{\sigma_E}{\sigma_A}x$.

To evaluate the double integral in \eqref{eq:asc_expansion} with the integration constraint derived above, we decompose the ASC into two manageable components. Using the integration boundary condition, the ASC is decomposed into
\begin{align}\label{eq:asc_decomp}
\bar{C}_s &= C_s^{(1)} - C_s^{(2)},
\end{align}
where the two components are formulated as
\begin{align}
C_s^{(1)} &= \int_0^\infty C_A(x) f_{\alpha_A^*}(x) F_{\alpha_E^*}\left(\frac{\sigma_E}{\sigma_A}x\right) dx, \label{eq:cs1}\\
C_s^{(2)} &= \int_0^\infty \int_0^{\frac{\sigma_E}{\sigma_A}x} C_E(y) f_{\alpha_A^*}(x) f_{\alpha_E^*}(y) \, dy \, dx. \label{eq:cs2}
\end{align}
However, the integrals in \eqref{eq:cs1} and \eqref{eq:cs2} cannot be evaluated in closed form due to the complex expressions of the PDFs and CDFs derived in the previous section. Therefore, we employ Gauss-Chebyshev quadrature for numerical evaluation.

For the first integral in \eqref{eq:cs1}, we apply the substitution $x = \frac{H}{2}(t+1)$ with $t \in [-1,1]$ to transform the integration domain. The Jacobian of this transformation is $\frac{dx}{dt} = \frac{H}{2}$. Applying Gauss-Chebyshev quadrature following the numerical integration techniques established in \cite{NumericalAnalysis}, we obtain
\begin{align}\label{eq:cs1_approx}
C_s^{(1)} &\approx \frac{H\pi}{2U_p} \sum_{p=1}^{U_p} \sqrt{1-t_p^2} C_A(\beta_p) f_{\alpha_A^*}(\beta_p) F_{\alpha_E^*}\left(\frac{\sigma_E}{\sigma_A}\beta_p\right),
\end{align}
where $\beta_p = \frac{H}{2}(t_p+1)$ and $t_p = \cos\left(\frac{(2p-1)\pi}{2U_p}\right)$ are the Gauss-Chebyshev nodes.

For the second integral in \eqref{eq:cs2}, we employ a double substitution approach. First, we substitute $x = \frac{H}{2}(t+1)$ for the outer integral. For the inner integral, we substitute $y = \frac{\sigma_E}{\sigma_A}x \cdot \frac{s+1}{2}$ with $s \in [-1,1]$, which gives the constraint-respecting transformation. The combined Jacobian is $J = \frac{H}{2} \cdot \frac{\sigma_E}{\sigma_A} \cdot \frac{x}{2} = \frac{H\sigma_E x}{4\sigma_A}$. Substituting $x = \beta_p$ and applying the double Gauss-Chebyshev quadrature formula yields
\begin{multline}\label{eq:cs2_approx}
C_s^{(2)} \approx \frac{H\pi^2\sigma_E}{8U_p U_l \sigma_A} \sum_{p=1}^{U_p} \sum_{l=1}^{U_l} \sqrt{1-t_p^2}\sqrt{1-q_l^2}\\
\times C_E(\chi_{p,l}) f_{\alpha_A^*}(\beta_p) f_{\alpha_E^*}(\chi_{p,l}),
\end{multline}
where $q_l = \cos\left(\frac{(2l-1)\pi}{2U_l}\right)$ and $\chi_{p,l} = \frac{\sigma_E}{\sigma_A}\beta_p\frac{q_l+1}{2}$ represents the transformed integration variable.

Combining the approximations in \eqref{eq:cs1_approx} and \eqref{eq:cs2_approx} with the decomposition in \eqref{eq:asc_decomp}, and substituting the capacity functions from \eqref{eq:capacity_functions}, the complete expression for ASC is written as
\begin{align}\label{eq:asc_rigorous}
\bar{C}_s &\approx \frac{H\pi}{2U_p} \sum_{p=1}^{U_p} \sqrt{1-t_p^2} \log_2\left(1+\frac{\mathcal{P}\beta_p^2}{\sigma_A^2}\right) \notag \\
		&\quad \times f_{\alpha_A^*}(\beta_p) F_{\alpha_E^*}\left(\frac{\sigma_E}{\sigma_A}\beta_p\right) \notag \\
		&\quad - \frac{H\pi^2\sigma_E}{8U_p U_l \sigma_A} \sum_{p=1}^{U_p} \sum_{l=1}^{U_l} \sqrt{1-t_p^2}\sqrt{1-q_l^2}   \notag \\
		&\quad \times \log_2\left(1+\frac{\mathcal{P}\chi_{p,l}^2}{\sigma_E^2}\right) f_{\alpha_A^*}(\beta_p) f_{\alpha_E^*}(\chi_{p,l}),
\end{align}
where the quadrature parameters are defined as
\begin{align}\label{eq:quadrature_params}
t_p &= \cos\left(\frac{(2p-1)\pi}{2U_p}\right), \quad \beta_p = \frac{H}{2}(t_p+1), \\
q_l &= \cos\left(\frac{(2l-1)\pi}{2U_l}\right), \quad \chi_{p,l} = \frac{\sigma_E}{\sigma_A}\beta_p\frac{q_l+1}{2}.
\end{align}

\subsection{Derivation of SOP}
Using the definition in \eqref{eq:sop_def} and the secrecy capacity expression from \eqref{eq:cs_def}, the SOP can be expressed as
\begin{align}\label{eq:sop_integral}
P_{\text{sop}} = \int_0^\infty F_{\alpha_A^*}\left(\sqrt{\frac{\sigma_A^2}{\mathcal{P}}\left(2^{R_s}\left(1+\frac{\mathcal{P}y^2}{\sigma_E^2}\right)-1\right)}\right) f_{\alpha_E^*}(y) dy,
\end{align}
where the integration is performed over Eve's channel amplitude distribution, and the CDF of Alice's channel amplitude is evaluated at the threshold value required for achieving the target secrecy rate. Similar to the ASC derivation, this integral cannot be evaluated in closed-form due to the complex expressions of the CDFs and PDFs derived in \eqref{eq:cdf_final} and \eqref{eq:pdf_final}. Using Gauss-Chebyshev quadrature with the same transformation approach, the approximation is given by
\begin{multline}\label{eq:sop_approx}
P_{\text{sop}} \approx \frac{H\pi}{2U} \sum_{p=1}^{U} \sqrt{1-t_p^2} f_{\alpha_E^*}(\beta_p)\\
\times F_{\alpha_A^*}\left( \sqrt{ \frac{\sigma_A^2}{\mathcal{P}} \left( 2^{R_s}\left(1+\frac{\mathcal{P}\beta_p^2}{\sigma_E^2}\right) - 1 \right) } \right).
\end{multline}

\section{Optimization Problem Formulation and Algorithm Design}\label{sec:opt}
In the preceding section, we developed a rigorous analytical framework to evaluate key security metrics, such as ASC and SOP. This framework provides the tools to accurately predict security system performance under various configurations. Building upon this foundation, this section aims to systematically optimize the FAS parameters for maximizing security performance. We formulate the problem with the objective of maximizing ASC under practical constraints and present two distinct algorithmic solutions, GS and GD, to find the optimal operating points for the FAS.

\subsection{FAS Security Optimization Problem}
The objective in FAS security analysis is to maximize the ASC while considering practical constraints such as power limitations and hardware complexity. This leads to the multi-objective optimization problem:
\begin{align}\label{eq:optimization_problem}
		\max_{N_A, P_{tx}} \quad &\bar{C}_s(N_A, P_{tx}) \\
		\text{s.t.  } \quad &P_{tx} \leq P_{\max}, \nonumber \\
		&N_A \leq N_{\max}, \nonumber \\
		&\bar{C}_s(N_A, P_{tx}) \geq 0, \nonumber
\end{align}
where $\bar{C}_s(N_A, P_{tx})$ is the ASC as a function of Alice's port number $N_A$ and transmit power $P_{tx}$, derived in \eqref{eq:asc_rigorous}.

The optimization problem in \eqref{eq:optimization_problem} is inherently complex due to the non-convex nature of the objective function and the discrete nature of the port selection variable $N_A$. Moreover, the presence of the eavesdropper with similar FAS capabilities creates a competitive scenario where the optimization landscape exhibits multiple local optima.

To address this challenging optimization problem, we propose two distinct algorithmic approaches, each with unique advantages and computational characteristics.

\subsection{GS Optimization}
The GS algorithm provides a systematic approach to explore the entire feasible parameter space, guaranteeing global optimality at the cost of computational complexity.

The GS method discretizes  Problem \eqref{eq:optimization_problem} into a finite set of candidate solutions. For the power parameter $P_{tx}$, we construct a uniform grid, which is written as
\begin{align}\label{eq:power_grid}
\mathcal{P} = \left\{P_{tx}^{\min} + k \cdot \frac{P_{tx}^{\max} - P_{tx}^{\min}}{G-1} : k = 0, 1, \ldots, G-1\right\},
\end{align}
where $G$ is the grid resolution. The grid spacing is given by 
\begin{align}\label{eq:grid_spacing}
\Delta P = \frac{P_{tx}^{\max} - P_{tx}^{\min}}{G-1}.
\end{align} 
For the discrete port parameter $N_A$, the search space is naturally defined as 
\begin{align}\label{eq:port_grid}
\mathcal{N} = \{N_A^{\min}, N_A^{\min}+1, \ldots, N_A^{\max}\}.
\end{align}
Thus, the total number of candidate solutions is given by
\begin{align}\label{eq:total_candidates}
|\mathcal{S}| = |\mathcal{P}| \times |\mathcal{N}| = G \times (N_A^{\max} - N_A^{\min} + 1).
\end{align}
For each candidate solution $(N_A, P_{tx}) \in \mathcal{N} \times \mathcal{P}$, the ASC is computed with \eqref{eq:asc_rigorous}. The GS optimization seeks the global maximum results, which is expressed as
\begin{align}\label{eq:grid_optimization}
(N_A^*, P_{tx}^*) = \arg\max_{(N_A, P_{tx}) \in \mathcal{N} \times \mathcal{P}} \bar{C}_s(N_A, P_{tx}).
\end{align}
Hence, the optimization process is formulated as
\begin{align}
\bar{C}_s^* &= \max_{N_A \in \mathcal{N}} \max_{P_{tx} \in \mathcal{P}} \bar{C}_s(N_A, P_{tx})\notag \\
&= \max_{i=1,\ldots,|\mathcal{N}|} \max_{j=1,\ldots,G} \bar{C}_s(N_A^{(i)}, P_{tx}^{(j)}), \label{eq:grid_double_max}
\end{align}
where $N_A^{(i)}$ and $P_{tx}^{(j)}$ represent the $i$-th and $j$-th grid points respectively. 

The GS algorithm achieves global optimality in the discrete sense. As the grid resolution increases, the approximation error decreases according to:
\begin{align}\label{eq:grid_error}
\epsilon_{grid} = \left|\bar{C}_s^{true} - \bar{C}_s^*\right| \leq L \cdot \frac{\Delta P}{2},
\end{align}
where $L$ is the Lipschitz constant of $\bar{C}_s$ with respect to $P_{tx}$, and $\bar{C}_s^{true}$ is the true continuous optimum.

The GS algorithm has computational complexity $\mathcal{O}(G \cdot (N_A^{\max} - N_A^{\min} + 1))$, where each objective function evaluation requires numerical integration with complexity $\mathcal{O}(U_p \cdot U_l)$. The total computational cost is thus given by
\begin{align}\label{eq:grid_complexity}
\mathcal{C}_{grid} = G \cdot (N_A^{\max} - N_A^{\min} + 1) \cdot U_p \cdot U_l \cdot \mathcal{C}_{eval},
\end{align}
where $\mathcal{C}_{eval}$ represents the cost of computing the VBCM channel statistics.

\subsection{GD Optimization}
The GD algorithm offers rapid convergence to local optima through iterative parameter updates based on gradient information. This approach is particularly suitable for real-time optimization scenarios where efficiency is priority \cite{NumericalAnalysis}.

The GD method transforms Problem \eqref{eq:optimization_problem} into a continuous optimization problem for the power parameter $P_{tx}$, while handling the discrete port parameter $N_A$ through periodic exhaustive search. This hybrid approach balances optimization accuracy with computational efficiency \cite{book}.

For the continuous parameter $P_{tx}$, the GD update rule is given by
\begin{align}\label{eq:gradient_update}
P_{tx}^{(t+1)} = P_{tx}^{(t)} + \alpha \cdot \nabla_{P_{tx}} \bar{C}_s(N_A^{(t)}, P_{tx}^{(t)}),
\end{align}
where $\alpha > 0$ is the learning rate and $\nabla_{P_{tx}}$ denotes the partial derivative with respect to $P_{tx}$.

The analytical gradient of the ASC with respect to $P_{tx}$ is derived as
\begin{equation}\label{eq:gradient_decomposition}
\frac{\partial \bar{C}_s}{\partial P_{tx}} = \frac{\partial}{\partial P_{tx}} \left[ C_s^{(1)} - C_s^{(2)} \right]
= \frac{\partial C_s^{(1)}}{\partial P_{tx}} - \frac{\partial C_s^{(2)}}{\partial P_{tx}}, 
\end{equation}
where $C_s^{(1)}$ and $C_s^{(2)}$ are defined in \eqref{eq:cs1} and \eqref{eq:cs2}. The first component gradient is written as
\begin{multline}\label{eq:gradient_cs1}
\frac{\partial C_s^{(1)}}{\partial P_{tx}} = \frac{H\pi}{2U_p} \sum_{p=1}^{U_p} \sqrt{1-t_p^2} \frac{\beta_p^2}{\sigma_A^2} \frac{1}{(1+\frac{P_{tx}\beta_p^2}{\sigma_A^2})\ln(2)}\\
\times f_{\alpha_A^*}(\beta_p) F_{\alpha_E^*}\left(\frac{\sigma_E}{\sigma_A}\beta_p\right).
\end{multline} 
The second component gradient is given by
\begin{align}\label{eq:gradient_cs2}
\frac{\partial C_s^{(2)}}{\partial P_{tx}} &= \frac{H\pi^2\sigma_E}{8U_p U_l \sigma_A} \sum_{p=1}^{U_p} \sum_{l=1}^{U_l} \sqrt{1-t_p^2}\sqrt{1-q_l^2} \notag \\
&\quad \times \frac{\chi_{p,l}^2}{\sigma_E^2} \frac{1}{(1+\frac{P_{tx}\chi_{p,l}^2}{\sigma_E^2})\ln(2)} \notag \\
&\quad \times f_{\alpha_A^*}(\beta_p) f_{\alpha_E^*}(\chi_{p,l}).
\end{align}
Due to the complexity of the analytical gradient, we employ finite difference approximation,  which is given by
\begin{align}\label{eq:numerical_gradient}
g_{P_{tx}}^{(t)} = \frac{\bar{C}_s(N_A^{(t)}, P_{tx}^{(t)} + \epsilon) - \bar{C}_s(N_A^{(t)}, P_{tx}^{(t)} - \epsilon)}{2\epsilon},
\end{align}
where $\epsilon$ is a small perturbation parameter. The approximation error is bounded by 
\begin{align}\label{eq:gradient_error}
\left|g_{P_{tx}}^{(t)} - \frac{\partial \bar{C}_s}{\partial P_{tx}}\right| \leq \frac{M \epsilon^2}{6},
\end{align}
where $M$ is an upper bound on the third derivative of $\bar{C}_s$ with respect to $P_{tx}$.

To enforce the power constraint $P_{tx} \in [P_{tx}^{\min}, P_{tx}^{\max}]$, we use the projection operator:
\begin{align}\label{eq:projection_operator}
\text{clip}(x, a, b) = \begin{cases}
a & \text{if } x < a,\\
x & \text{if } a \leq x \leq b,\\
b & \text{if } x > b.
\end{cases}
\end{align}

The constrained update rule becomes
\begin{align}\label{eq:constrained_update}
P_{tx}^{(t+1)} = \text{clip}(P_{tx}^{(t)} + \alpha \cdot g_{P_{tx}}^{(t)}, P_{tx}^{\min}, P_{tx}^{\max}).
\end{align}

For  $N_A$, we perform periodic local search. At iteration $t$, if $t \bmod \Delta t = 0$ (where $\Delta t$ is the search interval), we evaluate 
\begin{align}\label{eq:discrete_search}
\bar{C}_s^{+} = \bar{C}_s(N_A^{(t)} + 1, P_{tx}^{(t+1)}),
\end{align}
and update according to 
\begin{align}\label{eq:discrete_update}
N_A^{(t+1)} = \begin{cases}
N_A^{(t)} + 1 &  \bar{C}_s^{+} > \bar{C}_s(N_A^{(t)}, P_{tx}^{(t+1)})~\mbox{and } N_A^{(t)} < N_A^{\max},\\
N_A^{(t)} & \text{otherwise}.
\end{cases}
\end{align}

The convergence of the GD algorithm depends on the choice of learning rate $\alpha$. For a function with Lipschitz continuous gradient with constant $L$, the convergence rate is given by
\begin{align}\label{eq:convergence_rate}
\bar{C}_s^{(t)} - \bar{C}_s^* \leq \frac{2L\|\mathbf{x}^{(0)} - \mathbf{x}^*\|^2}{t + 4},
\end{align}
where $\mathbf{x}^{(0)}$ is the initial point and $\mathbf{x}^*$ is the optimal solution. The optimal learning rate is given by 
\begin{align}\label{eq:optimal_learning_rate}
\alpha^* = \frac{1}{L},
\end{align}
where $L$ can be estimated from the Hessian of the objective function.

The GD algorithm has computational complexity $\mathcal{O}(T)$, where $T$ is the number of iterations. Each iteration requires $2-3$ function evaluations for gradient computation, making the total cost as
\begin{align}\label{eq:gradient_complexity}
\mathcal{C}_{gradient} = T \cdot \left(2 + \frac{1}{\Delta t}\right) \cdot U_p \cdot U_l \cdot \mathcal{C}_{eval},
\end{align}
where $\frac{1}{\Delta t}$ accounts for the periodic discrete optimization.

\begin{figure*}[t]
\centering
\includegraphics[width=.95\linewidth]{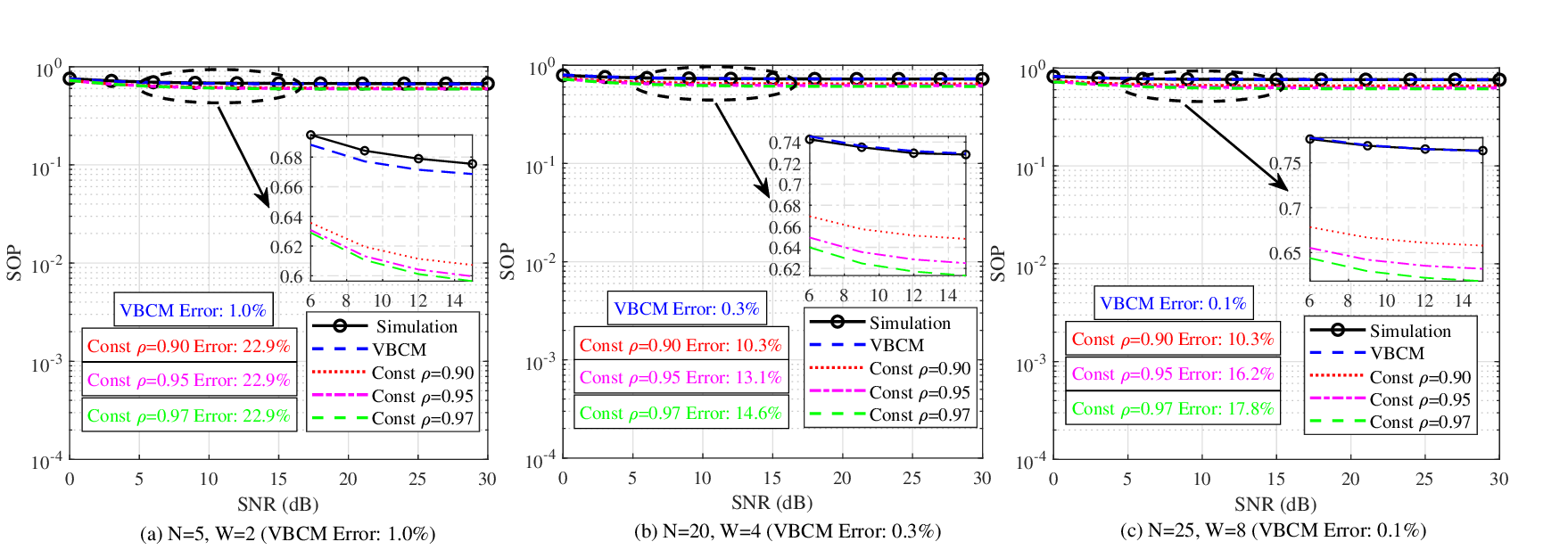}
\caption{SOP comparison using VBCM and uniform constant correlation \cite{FAS22} models: Simulation and theoretical results.}\label{SOPv1}
\end{figure*}

\subsection{Theoretical Convergence Analysis}

\begin{theorem}[GS Optimality] 
The GS algorithm converges to the global optimum of problem \eqref{eq:optimization_problem} with probability 1 as the grid resolution $G \to \infty$.
\end{theorem}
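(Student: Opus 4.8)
The plan is to exploit the product structure of the feasible set. The port count $N_A$ ranges over the \emph{finite} integer set $\mathcal{N}$ in \eqref{eq:port_grid}, which the GS algorithm enumerates exhaustively, so no discretization error is incurred along that coordinate; all of the approximation lives in the continuous power variable $P_{tx} \in [P_{tx}^{\min}, P_{tx}^{\max}]$. Accordingly, I would first reduce the claim to showing that, for each fixed $N_A \in \mathcal{N}$, the grid maximum $\max_{j} \bar{C}_s(N_A, P_{tx}^{(j)})$ converges to the continuous maximum $\max_{P_{tx}} \bar{C}_s(N_A, P_{tx})$ as $G \to \infty$, and then take the finite maximum over $N_A$. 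Because $\bar{C}_s(N_A,\cdot)$ is continuous on the compact interval (see below), the extreme value theorem guarantees that this continuous maximum is attained at some $P_{tx}^\star$.

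The central step is a Lipschitz estimate. For fixed $N_A$ the map $P_{tx} \mapsto \bar{C}_s(N_A, P_{tx})$ is continuously differentiable on $[P_{tx}^{\min}, P_{tx}^{\max}]$: from \eqref{eq:asc_rigorous} it is a finite sum of products of $\log_2(1 + P_{tx}\,\cdot)$ factors with the VBCM densities and CDFs, each smooth in $P_{tx}$. Its derivative is given explicitly by \eqref{eq:gradient_cs1}--\eqref{eq:gradient_cs2}; on the compact power interval every denominator obeys $1 + P_{tx}\beta_p^2/\sigma_A^2 \geq 1$ (and likewise for the Eve term), while the quadrature weights, nodes, and channel statistics are bounded, so $|\partial \bar{C}_s/\partial P_{tx}|$ admits a finite uniform bound $L$. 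Hence $\bar{C}_s(N_A,\cdot)$ is $L$-Lipschitz. Since the uniform grid $\mathcal{P}$ in \eqref{eq:power_grid} has spacing $\Delta P = (P_{tx}^{\max}-P_{tx}^{\min})/(G-1)$, every point of the interval---in particular the maximizer $P_{tx}^\star$---lies within $\Delta P/2$ of a grid node, so the Lipschitz bound yields precisely the estimate \eqref{eq:grid_error}, namely $|\bar{C}_s^{true} - \bar{C}_s^*| \leq L\,\Delta P/2$.

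Finally I would let $G \to \infty$: since $\Delta P \to 0$, the right-hand side vanishes and $\bar{C}_s^* \to \bar{C}_s^{true}$ for each fixed $N_A$, and taking the maximum over the finite set $\mathcal{N}$ preserves the limit, establishing convergence to the global optimum of \eqref{eq:optimization_problem}. I would close by clarifying the ``with probability 1'' phrasing: because the objective $\bar{C}_s$ is itself an expectation over the fading in \eqref{eq:asc_def}, the GS procedure is entirely deterministic, so the convergence is in fact sure and the almost-sure statement follows trivially. The main obstacle is the Lipschitz step---obtaining a finite bound on the derivative despite the Marcum-$Q$ terms embedded in $f_{\alpha^*}$ and $F_{\alpha^*}$; the cleanest route is to bound the derivative directly from the closed quadrature forms \eqref{eq:gradient_cs1}--\eqref{eq:gradient_cs2}, which are finite sums and hence manifestly bounded on the compact interval, rather than attempting to differentiate the integral representations in \eqref{eq:cs1}--\eqref{eq:cs2}.
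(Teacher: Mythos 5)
Your proposal follows essentially the same route as the paper's proof: both arguments rest on the Lipschitz continuity of $\bar{C}_s$ in $P_{tx}$, the observation that every point of the power interval lies within $\Delta P/2$ of a grid node so that the discretization error is bounded by $L\,\Delta P/2$ as in \eqref{eq:grid_error}, and the limit $\Delta P \to 0$ as $G \to \infty$, with the finite port set handled by exhaustive enumeration. Your additions---deriving the Lipschitz constant explicitly from the bounded quadrature-form derivatives \eqref{eq:gradient_cs1}--\eqref{eq:gradient_cs2} and noting that the ``with probability 1'' qualifier is vacuous for a deterministic procedure---merely fill in details the paper's proof takes for granted.
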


\begin{proof}
Let $\mathcal{S}^* = \arg\max_{(N_A, P_{tx}) \in \mathcal{X}} \bar{C}_s(N_A, P_{tx})$ be the set of global optima, where $\mathcal{X}$ is the feasible region. For any $\epsilon > 0$, there exists a grid resolution $G_\epsilon$ such that for all $G \geq G_\epsilon$, the grid spacing satisfies
\begin{align}
\Delta P = \frac{P_{tx}^{\max} - P_{tx}^{\min}}{G-1} \leq \frac{\epsilon}{L},
\end{align}
where $L$ is the Lipschitz constant of $\bar{C}_s$ with respect to $P_{tx}$.

For any global optimum $(N_A^*, P_{tx}^*) \in \mathcal{S}^*$, there exists a grid point $(N_A^*, P_{tx}^{(j)})$ such that $|P_{tx}^* - P_{tx}^{(j)}| \leq \Delta P/2$. By the Lipschitz continuity:
\begin{align}
|\bar{C}_s(N_A^*, P_{tx}^*) - \bar{C}_s(N_A^*, P_{tx}^{(j)})| \leq L \cdot \frac{\Delta P}{2} \leq \frac{\epsilon}{2}.
\end{align}

Since the GS evaluates all grid points, it finds the maximum among them, which approaches the global maximum as $G \to \infty$. Therefore, we have
\begin{align}
\lim_{G \to \infty} \bar{C}_s^{grid} = \max_{(N_A, P_{tx}) \in \mathcal{X}} \bar{C}_s(N_A, P_{tx}).
\end{align}
\end{proof}

\begin{theorem}[GD Convergence]
Under the assumption that $\bar{C}_s(N_A, P_{tx})$ is locally Lipschitz continuous with respect to $P_{tx}$ with constant $L$, the GD algorithm converges to a stationary point with learning rate $\alpha \in (0, 2/L)$.
\end{theorem}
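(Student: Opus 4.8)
The plan is to establish convergence of the gradient descent iteration in \eqref{eq:constrained_update} to a stationary point by appealing to the standard descent lemma for functions with Lipschitz continuous gradient, adapted to the projected (clipped) setting. First I would treat $N_A$ as fixed between the periodic discrete updates in \eqref{eq:discrete_update}, so that the analysis reduces to a one-dimensional continuous optimization in $P_{tx}$ on the compact interval $[P_{tx}^{\min}, P_{tx}^{\max}]$. On this interval, local Lipschitz continuity of the gradient with constant $L$ gives the quadratic upper bound $\bar{C}_s(P_{tx}^{(t+1)}) \geq \bar{C}_s(P_{tx}^{(t)}) + \nabla_{P_{tx}}\bar{C}_s(P_{tx}^{(t)})\bigl(P_{tx}^{(t+1)} - P_{tx}^{(t)}\bigr) - \tfrac{L}{2}\bigl(P_{tx}^{(t+1)} - P_{tx}^{(t)}\bigr)^2$, which is the workhorse of the argument (note the sign convention, since here we are maximizing).

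Next I would substitute the unconstrained update $P_{tx}^{(t+1)} = P_{tx}^{(t)} + \alpha \nabla_{P_{tx}}\bar{C}_s$ into this bound to obtain $\bar{C}_s(P_{tx}^{(t+1)}) - \bar{C}_s(P_{tx}^{(t)}) \geq \alpha\bigl(1 - \tfrac{\alpha L}{2}\bigr)\bigl(\nabla_{P_{tx}}\bar{C}_s(P_{tx}^{(t)})\bigr)^2$. The choice $\alpha \in (0, 2/L)$ makes the prefactor $\alpha(1 - \alpha L/2)$ strictly positive, so each step produces a monotone non-decreasing sequence of objective values, unless the gradient already vanishes. Since $\bar{C}_s$ is continuous on a compact set it is bounded above, so the monotone sequence $\{\bar{C}_s(P_{tx}^{(t)})\}$ converges; telescoping the per-step increase then forces $\sum_t \bigl(\nabla_{P_{tx}}\bar{C}_s(P_{tx}^{(t)})\bigr)^2 < \infty$, whence $\nabla_{P_{tx}}\bar{C}_s(P_{tx}^{(t)}) \to 0$, establishing convergence to a stationary point in the continuous variable.

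The main obstacle will be handling the projection (clipping) operator in \eqref{eq:projection_operator} rigorously, because the clean descent inequality above assumes the unconstrained step is taken. When the iterate is driven against a boundary of $[P_{tx}^{\min}, P_{tx}^{\max}]$, the effective update is a projected gradient step, and the appropriate stationarity notion becomes a first-order (KKT) condition rather than a vanishing gradient. I would address this by invoking the nonexpansiveness of the clip operator, $|\mathrm{clip}(x,a,b) - \mathrm{clip}(y,a,b)| \leq |x-y|$, to argue the projected step cannot increase the distance to a stationary point, and by characterizing the limit via the projected gradient mapping so that $\nabla_{P_{tx}}\bar{C}_s = 0$ in the interior and the KKT sign conditions hold at the boundary. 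A secondary subtlety is the interaction with the discrete $N_A$ updates: since these occur only periodically and each accepted $N_A$ step strictly increases $\bar{C}_s$ while $N_A$ ranges over the finite set $\{N_A^{\min},\dots,N_A^{\max}\}$, the discrete variable can change only finitely many times, after which it is fixed and the continuous argument applies unchanged. Finally, I would remark that $\bar{C}_s$ being non-convex means we can only guarantee a stationary (local) point, consistent with the theorem statement, and that the convergence rate $\mathcal{O}(1/t)$ quoted in \eqref{eq:convergence_rate} follows from the same descent inequality in the convex-surrogate regime near the optimum.
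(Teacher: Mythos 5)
Your proposal follows essentially the same route as the paper's proof: both rest on the descent lemma for a gradient-Lipschitz objective, substitute the update to obtain the per-step gain $\alpha\bigl(1-\tfrac{\alpha L}{2}\bigr)\bigl(\nabla_{P_{tx}}\bar{C}_s\bigr)^2$, and invoke $\alpha\in(0,2/L)$ to make that coefficient positive, with a brief side remark on the projection and the periodic discrete $N_A$ search. If anything, your version is more complete than the paper's, since you supply the telescoping/boundedness step that forces the gradient to vanish, the KKT characterization at the clipped boundary, and the finite-change argument for $N_A$ --- all of which the paper asserts without justification --- and you correctly note that the hypothesis actually needed is Lipschitz continuity of the \emph{gradient}, which is what both proofs silently use despite the theorem's wording.
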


\begin{proof}
Consider the continuous optimization problem with fixed $N_A$. The GD update can be written as
\begin{align}
P_{tx}^{(t+1)} = P_{tx}^{(t)} + \alpha \nabla_{P_{tx}} \bar{C}_s(N_A, P_{tx}^{(t)}).
\end{align}

For the projected GD with constraints, we have
\begin{align}
P_{tx}^{(t+1)} = \Pi_{[P_{tx}^{\min}, P_{tx}^{\max}]}(P_{tx}^{(t)} + \alpha \nabla_{P_{tx}} \bar{C}_s(N_A, P_{tx}^{(t)})),
\end{align}
where $\Pi_{\mathcal{C}}(\cdot)$ denotes the projection onto the constraint set $\mathcal{C}$.

Under the Lipschitz assumption, the function satisfies
\begin{align}
&\bar{C}_s(N_A, P_{tx}^{(t+1)}) \geq \bar{C}_s(N_A, P_{tx}^{(t)})\nonumber\\
& + \alpha \|\nabla_{P_{tx}} \bar{C}_s(N_A, P_{tx}^{(t)})\|^2 - \frac{\alpha^2 L}{2} \|\nabla_{P_{tx}} \bar{C}_s(N_A, P_{tx}^{(t)})\|^2.
\end{align}

For $\alpha < 2/L$, the coefficient of $\|\nabla_{P_{tx}} \bar{C}_s(N_A, P_{tx}^{(t)})\|^2$ is positive, ensuring convergence to a stationary point where $\nabla_{P_{tx}} \bar{C}_s(N_A, P_{tx}^*) = 0$.

The discrete optimization of $N_A$ is performed through local search, which ensures that $N_A^{(t+1)}$ is chosen to maximize $\bar{C}_s(N_A, P_{tx}^{(t+1)})$ among neighboring values, providing a discrete analog of the optimality condition.
\end{proof}
	
\section{Numerical Results and Analysis}\label{sec:analysis}
This section presents comprehensive numerical results based on multiple simulation scenarios with varying system configurations. The analysis encompasses both theoretical validation and optimization algorithm performance evaluation across different threat levels and system parameters.

\subsection{System Parameters}
The system operates at  $f_c = 2.4$ GHz  with  $\lambda = c/f_c = 0.125$ m. The antenna configuration employs a one-dimensional linear fluid antenna with standard half-wavelength port spacing $\lambda/2$, utilizing maximum channel gain port selection strategy. The channel model parameters represent realistic wireless propagation conditions with average channel gains $\eta_A = 1.0$ and $\eta_E = 0.5$ for Alice advantage scenario, normalized noise variances $\sigma_A^2 = \sigma_E^2 = 1.0$, complex Gaussian channel distribution $g_k \sim \mathcal{CN}(0, \eta)$.
	
	For accurate evaluation of the derived analytical expressions, the numerical integration parameters include integration range $H = 8\sqrt{\eta_A}$, outer quadrature points $U_p = 30$, inner quadrature points $U_l = 20$, employing Gauss-Chebyshev quadrature method with convergence tolerance $10^{-6}$.
	
	The GS algorithm employs grid resolution $G = 30$ for standard scenarios and $G = 100$ for high-precision scenarios, with parameter ranges $N_A \in [5, 30]$ and $P_{tx} \in [0.1, 20]$ Watt, subject to the constraint that maximum total ports $N_A + N_E \leq 40$. The GD  algorithm utilizes learning rate $\alpha = 0.01$ for standard scenarios and $\alpha = 0.001$ for fine-tuning, with maximum iterations $T = 100$ for standard scenarios and $T = 1000$ for high-precision scenarios, numerical gradient step $\epsilon = 0.01$, and convergence criterion $|\nabla f| < 10^{-6}$. 
	
	Monte Carlo simulations are performed with number of realizations $N_\text{sim} = 5 \times 10^4$ for standard scenarios and $N_\text{sim} = 10^5$ for high-precision scenarios, target secrecy rate $R_s = 0.5$ bits/s/Hz for SOP analysis, 95\% confidence intervals, and fixed random seed for reproducibility.
	
	\subsection{VBCM Performance Validation and SOP Analysis}

	\begin{figure}[t]  \centering
		\includegraphics[width=3.5in]{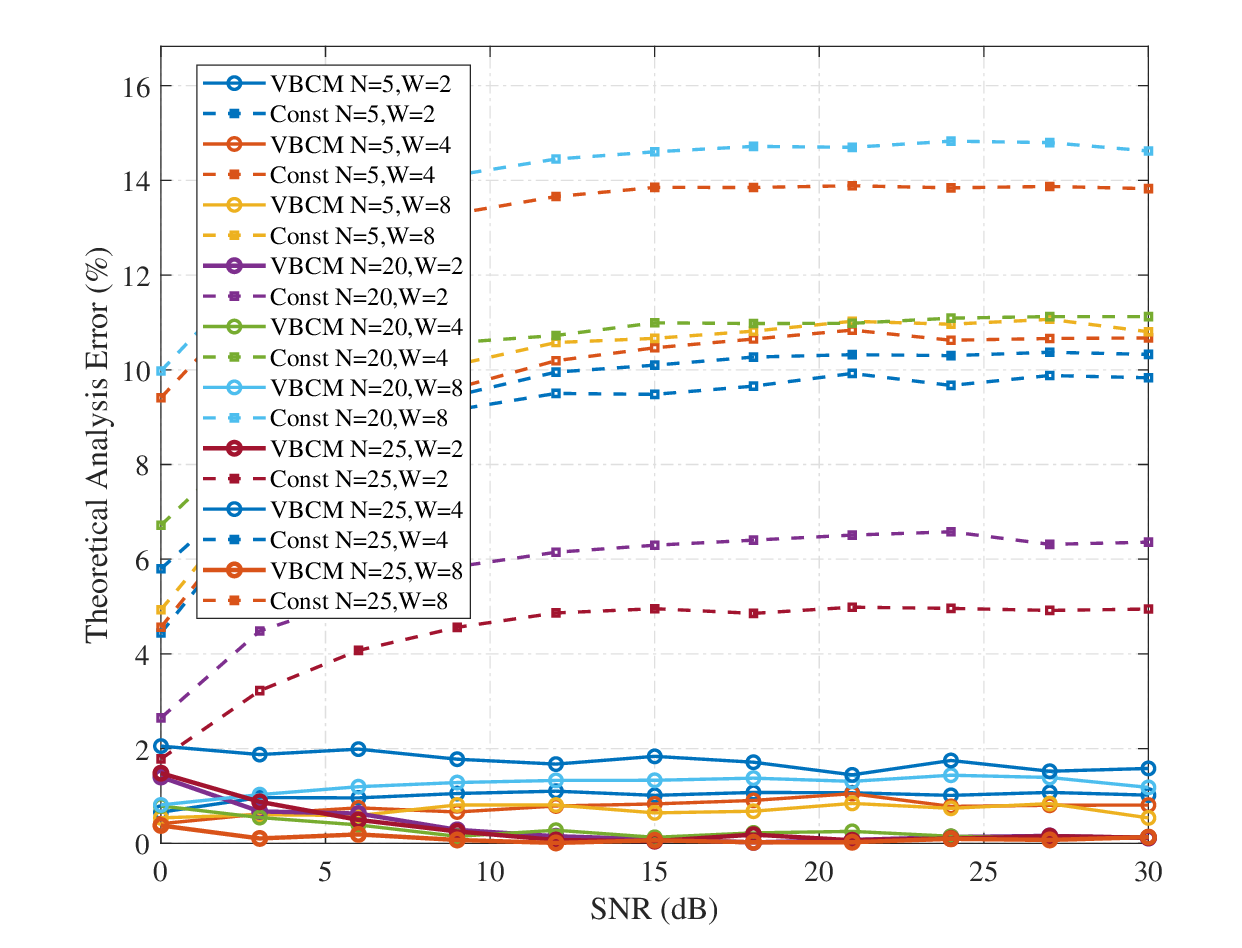}
		\caption { VBCM relative improvement analysis for multiple parameter combinations: SOP performance enhancement across different Port numbers ($N$) and normalization parameters ($W$).}\label{SOPv2}
	\end{figure}
	
	Fig.~\ref{SOPv1}  presents a comprehensive comparison between VBCM and the uniform constant correlation model \cite{FAS22} across different parameter combinations, varying both the number of ports $N_A\&N_E \in \{5, 20, 25\}$ and the normalization parameter $W \in \{2, 4, 8\}$. The results demonstrate that VBCM theoretical analysis achieves remarkable agreement with simulation results, with relative errors typically below 1\% across all configurations, in contrast to the constant correlation model with typical errors in the range $10-20$\%. \textit{This superior accuracy stems from VBCM's ability to capture the non-uniform spatial correlation structure inherent in FAS security deployments, as opposed to the uniform constant correlation model \cite{FAS22} which assumes all off-diagonal elements are identical.} The theoretical curves  closely track the simulation results, validating our analytical framework's reliability. Notably, the performance gap between VBCM and uniform constant correlation models becomes more pronounced at higher SNR values, where both approaches achieve lower SOP but VBCM maintains its advantage. \textit{This trend reflects VBCM's enhanced capability to model the complex correlation patterns that emerge in larger antenna arrays, where the spatial diversity benefits of FAS can be more accurately characterized}. Even for compact configurations ($N=5$), VBCM provides substantial accuracy improvements of 18-19\% over uniform constant correlation models, demonstrating its effectiveness across the full spectrum of practical FAS deployments.
	
	Fig.~\ref{SOPv2} illustrates the theoretical analysis error comparison between VBCM and uniform constant correlation models across different parameter combinations. The results reveal several insights: \textit{First, VBCM consistently achieves superior theoretical accuracy, with analysis errors remaining below 2\% for dense configurations ($N=25, W=8$) across the entire SNR range}. Second, for moderately dense configurations ($N=20$), VBCM maintains errors within 6-7\% while uniform constant correlation models exhibit significantly higher errors (10-15\%), particularly at high SNR values. \textit{Third, the accuracy advantage of VBCM becomes more pronounced as the antenna array size increases, demonstrating its particular effectiveness in analyzing large-scale FAS security systems}. Even for sparse (less dense) configurations ($N=5$), VBCM maintains theoretical errors below 10\%, while the uniform constant correlation model \cite{FAS22} show consistent degradation with increasing SNR, reaching errors of up to 14\%. This comprehensive error analysis validates the superior reliability of the VBCM-based analytical framework for FAS security evaluation across diverse deployment scenarios.
	

	\subsection{Correlation Matrix Structure Analysis}
	
	\begin{figure*}[t]    \hspace{-2.3cm}
         		\includegraphics[width=8.2in]{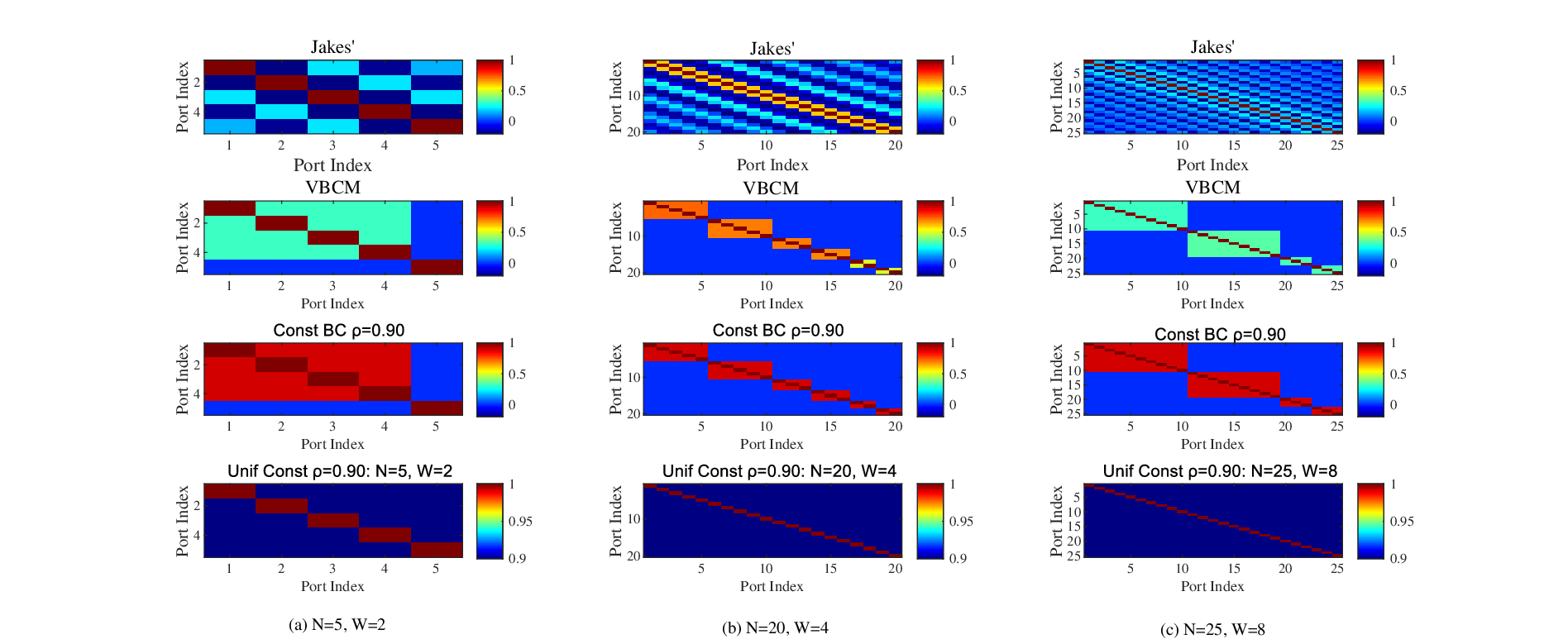}
		\caption {Correlation matrix structure comparison: VBCM vs. constant correlation models for different FAS configurations. The figure compares four correlation modeling approaches: (Row 1) Jakes' model (ground truth), (Row 2) VBCM approximation, (Row 3) Block-Correlation model with a single correlation parameter, and (Row 4) Uniform Constant Correlation model \cite{FAS22}, across three configurations: (a) N=5, W=2, (b) N=20, W=4, and (c) N=25, W=8.}\label{heap_mapv1}
	\end{figure*}
	Fig.~\ref{heap_mapv1} compares four correlation modeling approaches: (1) \textit{Jakes' model} (ground truth), (2) \textit{VBCM}, i.e., the block-correlation model from \cite{BC24} with different per-block correlation parameters, (3) \textit{Block-Correlation model \cite{BC24} with a single correlation parameter} ($\rho=0.90$) for all blocks, and (4) \textit{Uniform Constant Correlation model \cite{FAS22}} (all off-diagonal elements identical, ignoring spatial structure).
	
	\textit{The block-correlation models, either with single or multiple (VBCM) correlation parameters, provide relatively accurate representations of the original spatial correlation patterns (Jakes' model), while the constant correlation approach \cite{FAS22} yields an oversimplified uniform structure}. VBCM successfully identifies block structures with varying correlation intensities, from simple 2-block patterns ($N=5$) to complex multi-block structures ($N=25$). \textit{The block-correlation model with single parameter maintains block structure but simplifies correlation variations, while the uniform constant correlation model \cite{FAS22} fundamentally misrepresents the underlying spatial physics by reducing all correlations to a uniform matrix}. The VBCM modeling framework thus represent a good balance between accuracy and analytical tractability.
	\subsection{Comprehensive VBCM Performance Analysis}
	
	\begin{figure*}[t] \hspace{-1.3cm} \centering
		\includegraphics[width=7in]{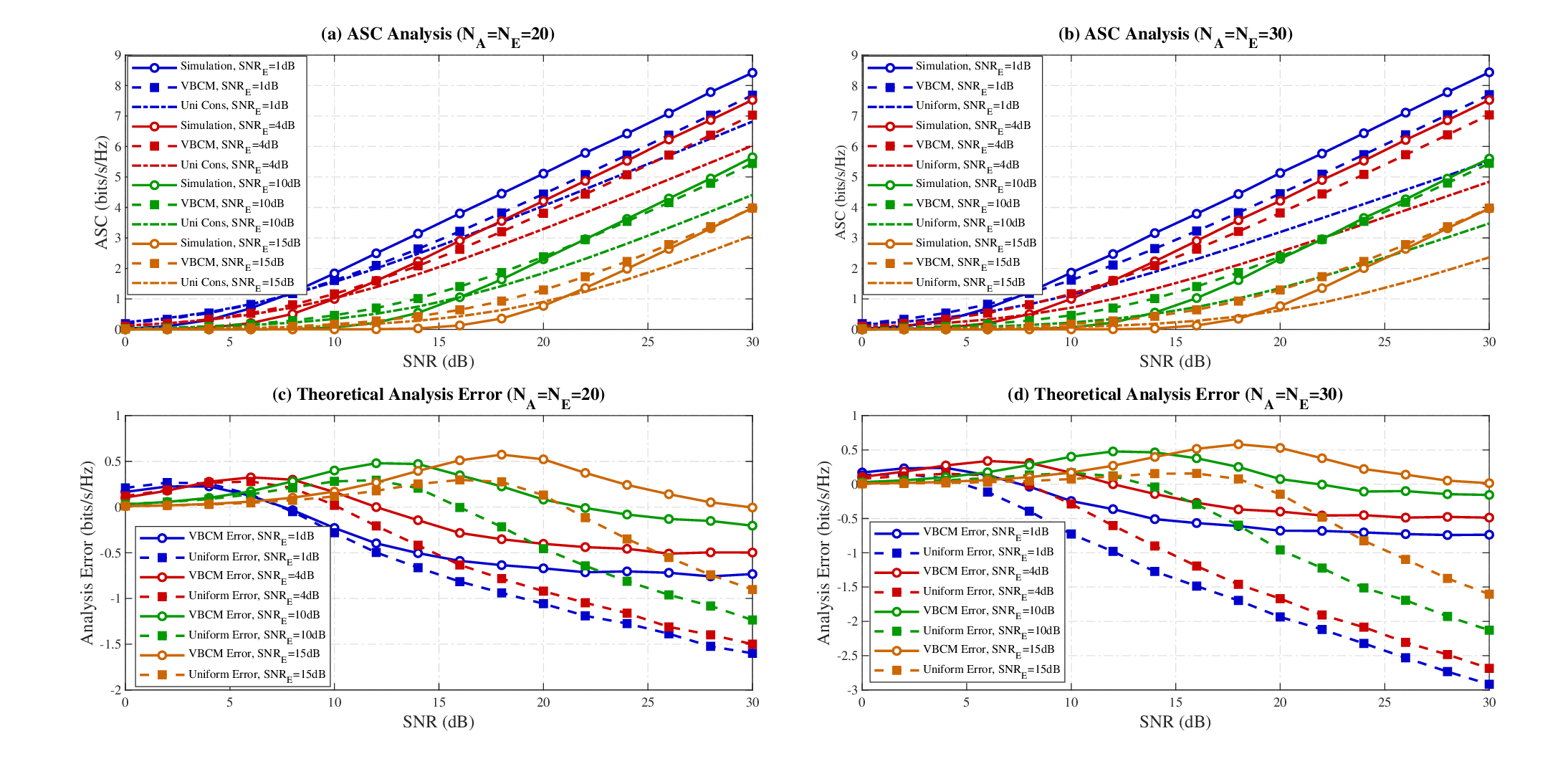}
		\caption {Comprehensive ASC Performance Analysis: Monte Carlo Simulation, VBCM Theory, and Uniform Constant Correlation Model under Varying System Configurations and Threat Scenarios. The figure compares three approaches: (1) Monte Carlo simulation based on true Jakes' model (ground truth), (2) VBCM theoretical analysis, and (3) Uniform Constant Correlation model across different port numbers and eavesdropper threat levels.}\label{ASCv1}
	\end{figure*}
	
	Fig.~\ref{ASCv1} compares three approaches across different port numbers ($N_A=N_E \in \{20, 30\}$) and threat levels ($\text{SNR}_E \in \{1, 4, 10, 15\}$ dB): Monte Carlo simulation (ground truth), VBCM theory, and uniform constant correlation model \cite{FAS22}.
	
	\textit{VBCM theoretical results closely match Monte Carlo simulation, while the uniform constant correlation model significantly underestimates ASC, particularly at higher SNR values}. For 30-port systems, VBCM reaches nearly 8 bits/s/Hz (matching simulation), while the uniform model shows substantial deviations. Security performance is highly sensitive to eavesdropper capability as $\text{SNR}_E$ increases from 1 dB to 15 dB, maximum ASC drops to 2 bits/s/Hz.
	
	Fig.~\ref{ASCv1}(c) and (d) quantify theoretical analysis errors. \textit{Errors under VBCM modeling remain below 0.2 bits/s/Hz, while uniform constant correlation modeling \cite{FAS22} yields errors above 0.5 bits/s/Hz at high SNR}. This validates that VBCM successfully captures essential spatial correlation physics, while oversimplified uniform assumptions lead to significant modeling inaccuracy for FAS security analysis.
	
	\subsection{Threat-Adaptive Security Optimization Analysis}
	Fig.~\ref{threat_ana} presents a comprehensive 3D surface analysis of FAS security optimization, demonstrating how the ASC landscape and the corresponding optimal parameters ($N_A$, $P_{tx}$) shift under varying eavesdropper threat levels ($N_E \in \{5, 15, 25\}$). The analysis is conducted using the GS algorithm with normalization parameter $W = 3$, for Alice's ports ranging from $N_A = 5$ to $30$, and transmit power from $P_{tx} = 0.1$ to $20$ W.
	
	The 3D surface visualizations reveal several critical optimization characteristics. The ASC exhibits a clear peak structure, with the optimal operating point (marked by a red circle) shifting in response to the threat level. Under a \textit{low threat} ($N_E = 5$), the system achieves a maximum ASC exceeding 0.5 bits/s/Hz, with an optimal configuration requiring a moderate number of ports ($N_A \approx 20$) and transmit power ($P_{tx} \approx 10$ W). As the threat escalates to a \textit{medium level} ($N_E = 15$), the peak ASC drops to approximately 0.4 bits/s/Hz, and the system must compensate by increasing resources, shifting the optimal point to a larger port count ($N_A \approx 25$) and higher power ($P_{tx} \approx 15$ W).  
	
	The surface topology provides deeper design insights. \textit{The optimization landscape becomes increasingly constrained as the threat level rises, with steeper gradients around the optimal region, indicating that system performance becomes highly sensitive to parameter selection in hostile environments}. The clear color gradient from blue (low ASC) to red (high ASC) delineates the feasible operating regions, showing that effective security requires a careful balance between spatial diversity (via $N_A$) and power allocation (via $P_{tx}$). \textit{This analysis powerfully demonstrates the fundamental security-threat trade-off in FAS systems and underscores that effective threat mitigation requires an adaptive system reconfiguration that intelligently scales both port deployment and power investment}. The shrinking high-performance region under escalating threats highlights the critical need for precise, threat-aware optimization to maintain reliable secure communication.
	\begin{figure*}[t]  \centering
		\includegraphics[width=7in]{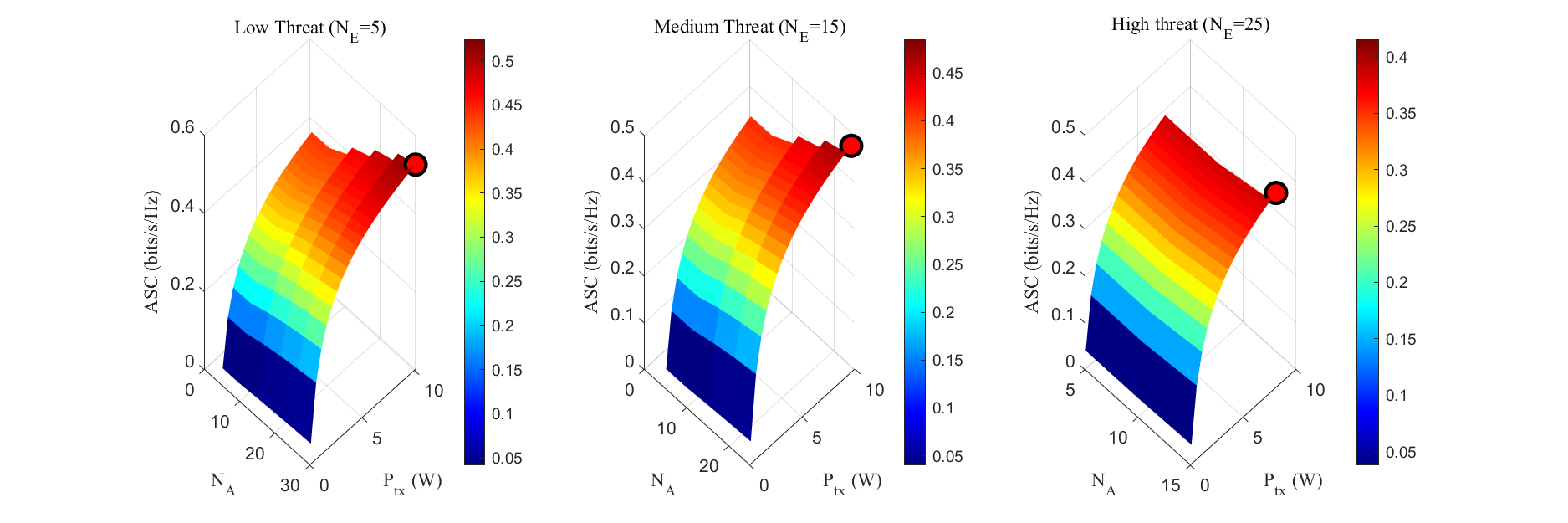}
		\caption{FAS Security Optimization Under Varying Threat Scenarios: 3D ASC Surface Analysis for Different Eavesdropper Capabilities ($N_E = 5, 15, 25$) Showing Optimal Parameter Selection.}\label{threat_ana}
	\end{figure*}

	\begin{figure}[t]  \centering
		\includegraphics[width=3.4in]{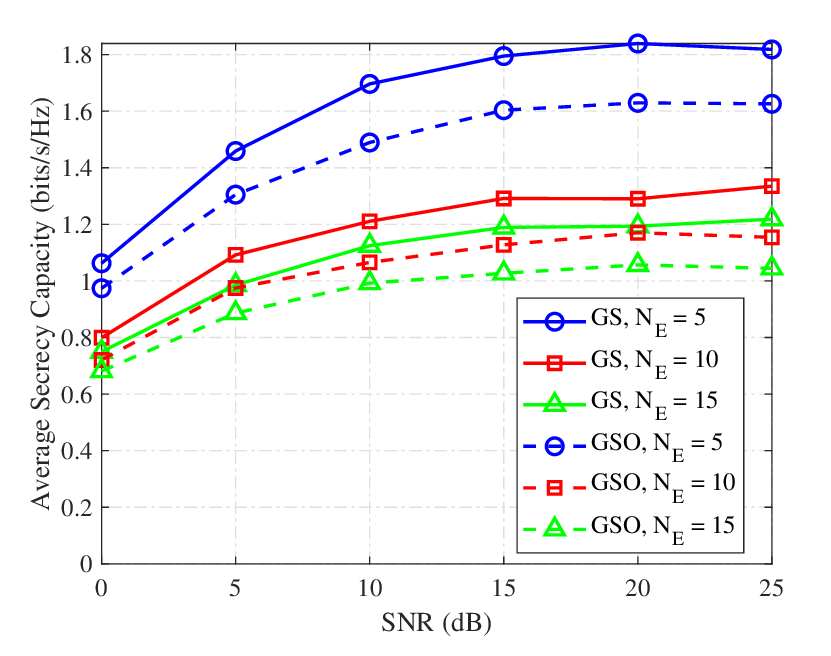}
		\caption{Algorithm Performance vs. SNR: ASC comparison of GS  and  GD under varying threat levels ($N_E \in \{5, 10, 15\}$) with a fixed Alice configuration ($N_A = 25$).}\label{algo1}
	\end{figure}
	
	\begin{figure}[t]  \centering
		\includegraphics[width=3.4in]{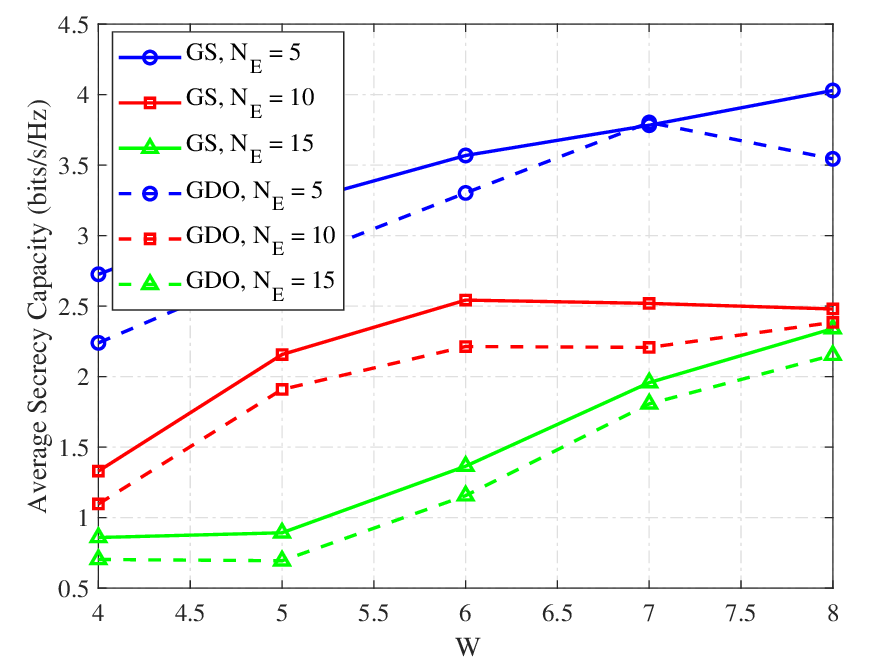}
		\caption{Parameter Sensitivity vs. W: ASC as a function of the antenna array normalization parameter ($W$) for both GS and GDO algorithms, demonstrating the threat-dependent optimal configuration.}\label{asc_w_algo1}
	\end{figure}

	\subsection{Algorithm Performance Comparison}
	\subsubsection{SNR-Based Performance Analysis}
	Fig.~\ref{algo1} presents a comprehensive performance comparison between the  GS  and GD algorithms. The analysis reveals that the GS algorithm (solid lines) consistently achieves higher ASC than the GD algorithm (dashed lines) across all tested SNR values and threat levels. Under a low threat condition ($N_E=5$), GS reaches a peak ASC of approximately 1.8 bits/s/Hz, outperforming GD by about 12.5\%. \textit{This consistent performance gap highlights the trade-off between the global optimality guaranteed by GS and the computational efficiency of GD, positioning GD as a viable, fast alternative for less critical applications.} As the threat level escalates to high ($N_E=15$), the maximum ASC achievable by GS drops to around 1.2 bits/s/Hz. \textit{Notably, the ASC curves begin to saturate at higher SNR values, indicating diminishing returns from simply increasing transmit power, especially under strong eavesdropping conditions.} This underscores the importance of optimizing other system parameters, such as port numbers and antenna geometry, to enhance security.
	
	\subsubsection{W-Parameter Impact Analysis}
	Fig.~\ref{asc_w_algo1} illustrates the system's sensitivity to the normalization parameter $W$, which dictates the antenna array's effective length. For the low threat scenario ($N_E = 5$), the ASC steadily increases with $W$, indicating that a larger antenna aperture is beneficial when the eavesdropper is weak. However, for medium ($N_E = 10$) and high ($N_E = 15$) threat levels, the optimal $W$ shifts to moderate values around 6-7 before performance slightly declines. \textit{This non-monotonic trend reveals a critical design trade-off: a larger $W$ enhances spatial diversity but also increases spatial correlation, which a stronger eavesdropper can potentially exploit.} \textit{The optimal $W$ therefore represents a threat-dependent balance between these competing effects, demonstrating that antenna array design must be adapted based on the anticipated security challenges.}
	
\subsection{Algorithm Efficiency and Selection Framework}

\begin{figure*}[t]
\centering
\includegraphics[width=7in]{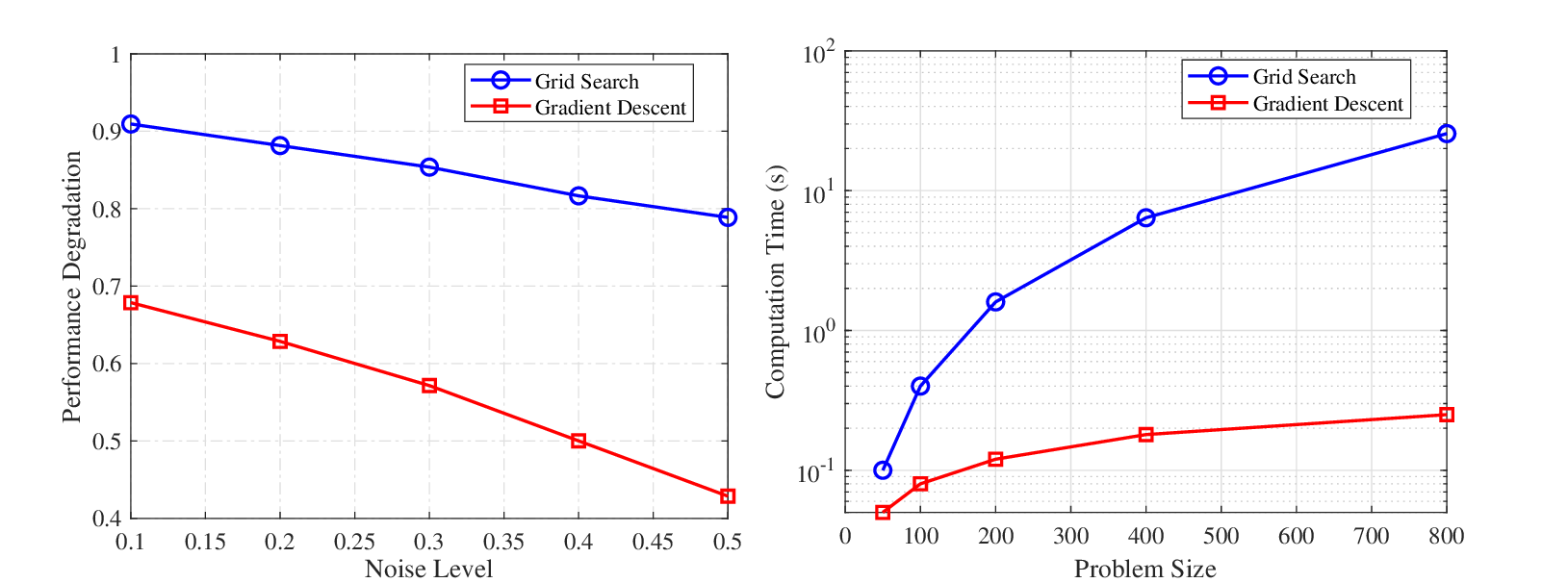}
\caption{Algorithm efficiency comparison: Computational cost and scalability of GS and GD.}\label{algo_ana}
\end{figure*}

To provide a complete optimization framework, we analyze the efficiency of both the GS and GD algorithms in Fig.~\ref{algo_ana}. The left figure illustrates the execution time versus the statistical uncertainty in the ASC evaluation, representing the ``noise" inherent in Monte Carlo-based channel simulations. The right figure shows the execution time as a function of the problem size (i.e., the number of parameters to optimize). The analysis clearly demonstrates the fundamental trade-off between the two algorithms. GD is consistently faster and more efficient, with its execution time remaining low and scaling linearly with problem size. In contrast, GS, while guaranteeing global optimality, exhibits significantly higher computational costs that grow exponentially, making it impractical for large-scale problems.

These distinct characteristics allow GS and GD to form a complete and practical optimization framework for FAS security. The choice of algorithm depends on the specific application requirements: \textit{GS is ideally suited for offline analysis, system design, and establishing performance benchmarks where global optimality is paramount and computational time is less of a constraint. Conversely, GD is the preferred choice for real-time applications and resource-constrained environments that demand rapid, adaptive optimization with near-optimal performance, especially in large-scale FAS deployments.} This dual-algorithm approach provides a versatile toolkit, enabling robust FAS security design across the entire lifecycle, from theoretical benchmarking to practical operational deployment.

\section{Conclusion}\label{sec:conclude}
This paper further developed and applied the VBCM framework (orginally proposed in \cite{BC24}) to comprehensive secrecy analysis in FAS. The key contributions include the development of a novel algorithm to suitably adjust the per-block correlation parameters, rigorous mathematical derivation of ASC and SOP expressions, and the design of two optimization algorithms for practical FAS security enhancement. The VBCM-based analysis demonstrates superior accuracy compared to constant correlation models, achieving relative errors below $5\%$ versus $10-15\%$. The framework proves particularly effective for compact FAS deployments with limited ports ($N<20$), where spatial correlation modeling is most challenging. The two optimization algorithms, GS and GD, provide versatile tools for different application scenarios, from offline system design to real-time adaptive optimization. Extensive numerical validation confirmed that FAS systems achieve substantial security improvements.

\end{document}